\newtheorem{theorem}{Theorem}
\newtheorem{corollary}[theorem]{Corollary}
\newtheorem{definition}[theorem]{Definition}
\newtheorem{lemma}[theorem]{Lemma}
\newenvironment{proof}[1][Proof]{\noindent\textbf{#1.} }{\ \hfill\rule{0.5em}{0.5em}}
\definecolor{red  }{rgb}{1,0,0}
\definecolor{blue }{rgb}{0,0,1}
\definecolor{green}{rgb}{0,1,0}
\begin{document}

\begin{center}
{\Large \textbf{$(p,q)$-deformed Virasoro-Witt $n$-algebra}} \vskip .5in

{\large Xiao-Yu Jia$^{a,}$\footnote{%
sherlyjxy@amss.ac.cn}, Lu Ding$^{b,}$\footnote{%
Corresponding author: dinglu@amss.ac.cn}, Zhao-Wen Yan$^{c,}$\footnote{%
yzw428@126.com}, Shi-Kun Wang}$^{b,}${\large \footnote{%
wsk@amss.ac.cn}} \\[0pt]
\vspace{10mm}$\bigskip ^{a}$\emph{School of Mathematical Sciences, Capital
Normal University, Beijing 100048, China} \\[0pt]

$^{b}$\emph{Institute of Applied Mathematics, Academy of Mathematics and
Systems Science, Chinese Academy of Sciences, Beijing 100190, China} \\[0pt]
$^{c}$\emph{School of Mathematical Sciences, Inner Mongolia University,
Hohhot 010021, China}\\[0pt]
\vskip.2in
\end{center}
\begin{abstract}
$n$-ary algebras have played important roles in mathematics and mathematical
physics. The purpose of this paper is to construct a deformation of
Virasoro-Witt $n$-algebra based on an oscillator realization with two
independent parameters $(p,q)$ and investigate its $n$-Lie subalgebra.
\end{abstract}
\baselineskip=18pt

\section{Introduction}

$n$-ary algebras have close relations with many fields of theoretical and
mathematical physics. In 1973, Nambu introduced Nambu $3$-algebra in the
Nambu mechanics which is a generalized classical Hamiltonian mechanics \cite%
{Nambu1,Takhtajan2}. Nambu $n$-algebra is a particular but very important
example of \thinspace\ $n$-Lie algebra. In mathematics, $n$-Lie algebra is
also known as Filippov $n$-algebra which was introduced by Filippov \cite%
{Filippov7}. A structure theory of finite-dimensional $n$-Lie algebras over
a field $F$ of characteristic $0$ was developed \cite{12}\cite{18}\cite{19}.
V. Kac. et. al classified the simple linearly compact $n$-Lie algebras \cite%
{Kac2}\cite{Kac1}. With a world-volume description of multiple M2-branes
using the $3$-algebras, great attention has been paid to the
infinite-dimensional $n$-Lie algebras in physics \cite{Bagger3}.
Chakrabortty et al. gave a $w_{\infty }$ 3-Lie algebra by applying a double
scaling limits on the generators of the $W_{\infty }$ algebra \cite%
{chakrabortty}. Chen et al. investigated the super high-order Virasoro
3-algebra and obtained the super $w_{\infty }$ 3-Lie algebra by applying the
appropriate scaling limits on the generators \cite{chen super w algebra}.
Lately, \cite{Chen5} established the relation between the dispersionless KdV
hierarchy and $w_{\infty }$ 3-Lie algebra. As a subalgebra of $w_{\infty }$
algebra, Virasoro-Witt (V-W) algebra plays an important role in physics.
Using the techniques of \bigskip $su(1,1)$ enveloping algebra, a nontrivial
3-bracket variant of the V-W algebra was constructed \cite{Curtright3}. The
ternary algebra is a Nambu 3-algebra only when a parameter in it is chosen
to be the special values. \cite{Curtright4} putted the V-W algebra as an
especial example to compare classical and quantal realizations of the
ternary algebraic structures. Zhang et al. found the special V-W 3-algebra
in the study of the quantum Calogero-Moser mode \cite{Zhang chunhong}. For
the $n$-ary brackets in the above cases with $n>3$ , they are null or don't
have good property.


More attentions are also paid to find quantum deformations of the
infinite-dimensional Lie algebras which is closely related to quantum
groups, initially proposed by Drinfeld in \cite{Drinfeld}. The single
parameter deformation of the V-W algebra has been widely investigated \cite%
{CM11}\cite{MC14}\cite{Frenkel}\cite{Hu}\cite{Kassel} which has important
applications in field theory and integrable systems. Curtright and Zachos
obtained the $q$-deformed V-W algebra which is closely related to the vertex
model of statistical mechanics and invariants of knot theory \cite{Cur17}%
\cite{EW15}. The central extension of the $q$-deformed V-W algebra was
constructed using the Jacobi identity \cite{Sato12}. Recently, \cite{Ding16}
investigated the $q$-deformed V-W $n$-algebra based on an oscillator
realization and explored its intriguing features. Although the $q$-deformed
V-W $n$-algebra is not an $n$-Lie algebra, it is a sh-$n$-Lie algebra and
owns a beautiful expression. It should be noted that both $n$-Lie algebra
and sh-$n$-Lie algebra are the generalizations of Lie algebra on $n$-ary
bracket.


Two-parameter deformations of Lie algebras including $\left( p,q\right) $%
-deformed V-W algebra have been investigated 
\cite{RR19}\cite{RR20}\cite{Chung}. For the case on $n$-ary bracket, it has
not been presented in the existing literature. The aim of this paper is to
investigate two-parameter deformation of the V-W $n$-algebra. We find that
the $\left( p,q\right) $-deformed V-W $n$-algebra is a generalization of $q$%
-deformed V-W $n$-algebra when $n$ is odd. The $\left( p,q\right) $-deformed
$n$-algebra is a sh-$n$-Lie algebra but not an $n$-Lie algebra. Since the $n$%
-Lie algebra has the close relationship with integral systems \cite{Chen5}%
\cite{Chen6}, then we also study $n$-Lie subalgebras of the $(p,q)$-deformed
V-W $n$-algebra at last.

The paper is organized as follows. In section 2, we recall the definitions
of $n$-Lie algebra and sh-$n$-Lie algebra. Section 3 is dedicated to
construct the $(p,q)$-deformed V-W $n$-algebra. In section 4, we investigate
its $n$-Lie subalgebras.

\section{$n$-Lie algebra and sh-$n$-Lie algebra}

Lie algebra is a linear space associated with a Lie bracket which satisfies
the skew-symmetry and Jacobi identity. Replacing the Lie bracket with $n$%
-bracket which satisfies a specific characteristic identity, the $n$-ary
generalizations of Lie algebra can be derived. In this section, we review
the $n$-Lie algebra and sh-$n$-Lie algebra which are two kinds of
generalizations of Lie algebra. More information can be found in \cite{Goze}%
\cite{Filippov7}.

\begin{definition}
An $n$-Lie algebra is a vector space $V$ endowed with a multilinear map $%
[\cdot ,\cdots ,\cdot ]$ from $V^{\otimes n}$ to $V$ and satisfies the
skew-symmetry condition
\begin{equation}
\lbrack X_{1},\cdots ,X_{i},\cdots ,X_{j},\cdots X_{n}]=-[X_{1},\cdots
,X_{j},\cdots ,X_{i},\cdots X_{n}]
\end{equation}%
and the fundamental identity (FI) or Filippov condition
\begin{equation}
\lbrack Y_{1},\cdots ,Y_{n-1},[X_{1},\cdots
,X_{n}]]=\sum_{k=1}^{n}[X_{1},\cdots ,X_{k-1},[Y_{1},\cdots
,Y_{n-1},X_{k}],X_{k+1},\cdots ,X_{n}],  \label{eq:eFI}
\end{equation}%
where $X_{i},Y_{j}\in V$.
\end{definition}

Obviously, for $n=2$, FI reduces to the ordinary Jacobi identity. For the
simplest $n=3$ case, FI is
\begin{equation}
\lbrack \lbrack
Y_{1},Y_{2},[X_{1},X_{2},X_{3}]]=[[Y_{1},Y_{2},X_{1}],X_{2},X_{3}]+[X_{1},[Y_{1},Y_{2},X_{2}],X_{3}]+[X_{1},X_{2},[Y_{1},Y_{2},X_{3}]].
\label{eq:FI}
\end{equation}%
Nambu 3-bracket is an important example of $3$-Lie algebra introduced in the
Nambu mechanics which is a generalization of classical Hamiltonian mechanics
\cite{Nambu1}. It is defined as
\begin{equation}
\left[ f_{1},f_{2},f_{3}\right] =\det \left( \frac{\partial \left(
f_{1},f_{2},f_{3}\right) }{\partial \left( x_{1},x_{2},x_{3}\right) }\right)
,  \label{Fi bracket}
\end{equation}%
where $f_{i}$ is considered as a classical observable on $3$-dimensional
phase space $\mathbf{R}^{3}.$

Now let us refer to sh-$n$-Lie algebra.

\begin{definition}
Let $V$ be a vector space, $[\cdot ,\cdots ,\cdot ]$ be an $n$-ary
skew-symmetric product on the vector space $V$. $(V,[\cdot ,\cdots ,\cdot ])$
is a sh-$n$-Lie algebra if it satisfies the sh-Jacobi identity
\begin{equation}
\sum_{\sigma \in Sh(n,n-1)}(-1)^{\epsilon (\sigma )}\left[ [X_{\sigma
(1)},\cdots ,X_{\sigma (n)}],X_{\sigma (n+1)},\cdots ,X_{\sigma (2n-1)} %
\right] =0,  \label{eq:shJacobi}
\end{equation}
for any $X_{i}\in V$, $\sigma$ is a permutation of the indices $%
(1,2,\cdots,2n-1)$, $\epsilon (\sigma)$ is the parity of the permutation $%
\sigma$, where $Sh(n,n-1)$ is the subset of $\Sigma _{2n-1}$ defined by
\begin{equation}
Sh(n,n-1)=\{\sigma \in \Sigma _{2n-1},\sigma (1)<\cdots <\sigma (n),\sigma
(n+1)<\cdots <\sigma (2n-1)\}.
\end{equation}
\end{definition}

In terms of the L$\acute{e}$vi-Civit$\grave{a}$ symbol, i.e.,
\begin{equation}
\epsilon _{j_{1}\cdots j_{p}}^{i_{1}\cdots i_{p}}=\det \left(
\begin{array}{ccc}
\delta _{j_{1}}^{i_{1}} & \cdots & \delta _{j_{p}}^{i_{1}} \\
\vdots &  & \vdots \\
\delta _{j_{1}}^{i_{p}} & \cdots & \delta _{j_{p}}^{i_{p}}%
\end{array}%
\right) ,  \label{eq:LV}
\end{equation}%
under the skew-symmetry condition, the sh-Jacobi identity (\ref{eq:shJacobi}%
) can be expressed as
\begin{equation}
\epsilon _{1\cdots 2n-1}^{i_{1}\cdots i_{2n-1}}\left[ [X_{i_{1}},\cdots
,X_{i_{n}}],X_{i_{n+1}},\cdots ,X_{i_{2n-1}}\right] =0.  \label{emisila sh}
\end{equation}

When $n=2$, the sh-Jacobi identity (\ref{eq:shJacobi}) reduces to the Jacobi
identity. When $n=3$, the sh-Jacobi identity (\ref{eq:shJacobi}) becomes
\begin{eqnarray}  \label{3sh}
&&[[X_{1},X_{2},X_{3}],X_{4},X_{5}]-[[X_{1},X_{2},X_{4}],X_{3},X_{5}]+[[X_{1},X_{2},X_{5}],X_{3},X_{4}]
\notag \\
&&+[[X_{1},X_{3},X_{4}],X_{2},X_{5}]-[[X_{1},X_{3},X_{5}],X_{2},X_{4}]+[[X_{1},X_{4},X_{5}],X_{2},X_{3}]
\notag \\
&&-[[X_{2},X_{3},X_{4}],X_{1},X_{5}]+[[X_{2},X_{3},X_{5}],X_{1},X_{4}]-[[X_{2},X_{4},X_{5}],X_{1},X_{3}]
\notag \\
&&+[[X_{3},X_{4},X_{5}],X_{1},X_{2}]=0.
\end{eqnarray}

It should be noted that an $n$-Lie algebra is a sh-$n$-Lie algebra, but the
converse may be not true, for example, the $q$-deformed V-W $n$-algebra in
\cite{Ding16}.

\section{$(p,q)$-deformed V-W $n$-algebra}

The classical V-W algebras can be constructed by the harmonic oscillator
algebra $\{a,a^{\dag },N\}$. Based on the deformed oscillator generators
with a single parameter or two parameters, the deformed V-W algebra are
investigated in \cite{Biederharn}\cite{CM11}\cite{RR19}\cite{Hayashi}\cite%
{MacFarlane}. In this section, before investigating the $(p,q)$-deformed V-W
$n$-algebra, we review the case of 2-bracket.

Let us take the $(p,q)$-deformed generators
\begin{equation}
L_{m}=-p^{N}(a^{\dag })^{m+1}a,  \label{p-q genrator}
\end{equation}%
where the $(p,q)$-deformed oscillator is deformed by the following relations
\begin{eqnarray}
&&[N,a]=-a,\ \ \ [N,a^{\dag }]=a^{\dag },  \notag  \label{q,posc} \\
&&aa^{\dag }-qa^{\dag }a=p^{-N},  \notag \\
&&aa^{\dag }-p^{-1}a^{\dag }a=q^{N}.
\end{eqnarray}%
Based on an oscillator realization with two independent parameters $(p,q)$,
the $(p,q)$-deformed V-W algebra is constructed by
\begin{equation}
\lbrack
L_{m},L_{n}]_{(q^{m}p^{-n},q^{n}p^{-m})}:=q^{m}p^{-n}L_{m}L_{n}-q^{n}p^{-m}L_{n}L_{m}=-(qp^{-1})^{m}[n-m]_{p,q}L_{m+n},
\label{q,pvira}
\end{equation}%
where
\begin{equation}
\lbrack x]_{p,q}=\frac{q^{x}-p^{-x}}{q-p^{-1}}.  \label{p-q-2}
\end{equation}%
It is easy to verify that the bracket(\ref{q,pvira}) is skew-symmetry and
satisfies the $(p,q)$-deformed Jacobi identity
\begin{equation}
(q^{m}+p^{-m})[L_{m},[L_{n},L_{k}]_{\left( q^{n}p^{-k},q^{k}p^{-n}\right)
}]_{\left( q^{m}p^{-\left( n+k\right) },q^{\left( n+k\right) }p^{-m}\right)
}+cycl.perms.=0.
\end{equation}%
The $(p,q)$-deformed V-W algebra (\ref{p-q-2}), in the limit $p\rightarrow q$%
, agrees with the $q$-deformed V-W algebra \cite{Ding16}%
\begin{equation}
\lbrack
L_{m},L_{n}]_{(q^{m-n},q^{n-m})}:=q^{m-n}L_{m}L_{n}-q^{n-m}L_{n}L_{m}=[m-n]L_{m+n},
\end{equation}%
where $[k]=\frac{q^{k}-q^{-k}}{q-q^{-1}}.$ And the $q$-deformed V-W algebra
satisfies the $q$-deformed Jacobi identity
\begin{equation}
\lbrack L_{m},[L_{n},L_{k}]_{\left( q^{n-k},q^{k-n}\right) }]_{\left(
q^{2m-\left( n+k\right) },q^{\left( n+k\right) -2m}\right) }+cycl.perms.=0.
\end{equation}%
When $p\rightarrow 1,q\rightarrow 1,$ the $(p,q)$-deformed V-W algebra (\ref%
{p-q-2}) becomes the well-known Virasoro-Witt algebra%
\begin{equation}
\lbrack L_{m},L_{n}]=(m-n)L_{m+n}  \label{V-W}
\end{equation}%
which satisfies the Jacobi identity%
\begin{equation}
\lbrack L_{m},[L_{n},L_{k}]]+[L_{n},[L_{k},L_{m}]]+[L_{k},[L_{m},L_{n}]]=0.
\end{equation}

In \cite{Nambu1}, the quantum Nambu 3-bracket is introduced to be a sum of
single operators multiplying commutators of the remaining two, i.e.,
\begin{equation*}
\lbrack A,B,C]=A[B,C]+B[C,A]+C[A,B].
\end{equation*}%
Based on the deformation of the above defined 3-bracket, \cite{Ding16}
constructed the $q$-deformed V-W $n$-algebra which satisfies the sh-Jacobi
identity. Encouraged by the $q$-deformed V-W $n$-algebra, we consider to
construct the nontrivial deformed V-W $n$-algebra with two independent
parameters.

By means of the $(p,q)$-deformed V-W algebra (\ref{q,pvira}) and the $(p,q)$%
-deformed oscillator generators (\ref{p-q genrator}), we introduce $\ $the $%
(p,q)$-deformed 3-bracket as
\begin{eqnarray}
\lbrack L_{m},L_{n},L_{k}] &=&p^{m}q^{m-(n+k)}L_{m}\cdot \lbrack
L_{n},L_{k}]_{(q^{n}p^{-k},q^{k}p^{-n})}+p^{n}q^{n-(k+m)}L_{n}\cdot \lbrack
L_{k},L_{m}]_{(q^{k}p^{-m},q^{m}p^{-k})}  \notag  \label{pq3} \\
&&+p^{k}q^{k-(m+n)}L_{k}\cdot \lbrack
L_{m},L_{n}]_{(q^{m}p^{-n},q^{n}p^{-m})}.
\end{eqnarray}%
Substituting (\ref{q,pvira}) into (\ref{pq3}), we obtain
\begin{eqnarray}
\lbrack L_{m},L_{n},L_{k}] &=&-\frac{1}{q-p^{-1}}((pq^{-1})^{m-n}\left[ 2m-2n%
\right] _{p,q}+(pq^{-1})^{n-k}\left[ 2n-2k\right] _{p,q}  \notag
\label{sh-3} \\
&&+(pq^{-1})^{k-m}\left[ 2k-2m\right] _{p,q})L_{m+n+k}  \notag \\
&=&\frac{-(pq^{-1})^{m+n+k}}{\left( q-p^{-1}\right) ^{2}}\det \left(
\begin{array}{ccc}
p^{-2m} & p^{-2n} & p^{-2k} \\
p^{-m}q^{m} & p^{-n}q^{n} & p^{-k}q^{k} \\
q^{2m} & q^{2n} & q^{2k}%
\end{array}%
\right) L_{m+n+k}.
\end{eqnarray}%
By direct calculation, it is easy to prove that the skew-symmetry holds and
we derive it satisfies the sh-Jacobi's identity (\ref{3sh}), but the FI (\ref%
{eq:FI}) does not hold. Thus the algebra (\ref{sh-3}) is a $(p,q)$-sh-3-Lie
algebra. (\ref{sh-3}) reduces to the $q$-sh-3-Lie algebra derived in \cite%
{Ding16} in the limit $p\rightarrow q$.

Considering the interesting result derived from (\ref{sh-3}), we extend our
result to $(p,q)$-$n$-bracket. Define the $(p,q)$-$n$-bracket with $n>3$ as
follows:
\begin{equation}
{[}L_{i_{1}},L_{i_{2}},\cdots ,L_{i_{n}}]=\sum\limits_{s=1}^{n}\left(
-1\right) ^{s+1}\left( pq\right) ^{xi_{s}+y\left( i_{1}+\cdots +\widehat{i}%
_{s}+\cdots +i_{n}\right) }L_{i_{s}}[L_{i_{1}},\cdots ,\widehat{L}%
_{i_{s}},\cdots ,L_{i_{n}}],  \label{eq:qlbracket}
\end{equation}%
in which $(x=\frac{n-1}{2},y=-1)$ for odd $n\geq 5$ and $(x=\frac{n}{2},y=0)$
for even $n\geq 4$. Here we denote the hat symbol $\widehat{A}$ stands for
the term $A$ omitted.

\begin{lemma}
The $\left( p,q\right) $-generators (\ref{p-q genrator}) satisfy the
following closed algebraic structure relation:
\begin{align}  \label{p-q bracket}
& {[}L_{i_{1}},L_{i_{2}},\cdots ,L_{i_{n}}{]}=\frac{ sign(n) }{(
q-p^{-1})^{n-1}}(pq^{-1})^{{[}\frac{n-1}{2}{]}(i_1+\cdots+i_n)}  \notag \\
& \det \left(
\begin{array}{cccc}
q^{-2\left\lfloor \frac{n-1}{2}\right\rfloor i_{1}} & \cdots &
q^{-2\left\lfloor \frac{n-1}{2} \right\rfloor i_{n}} &  \\
p^{(-2\left\lfloor \frac{n-1}{2}\right\rfloor+1) i_{1}}q^{i_1} & \cdots &
p^{(-2\left\lfloor \frac{n-1}{2}\right\rfloor+1) i_{n}}q^{i_n} &  \\
\vdots & \vdots & \vdots &  \\
p^{(\left\lfloor \frac{n}{2}\right\rfloor-\left\lfloor \frac{n-1}{2}%
\right\rfloor-1)i_1}q^{(\left\lfloor \frac{n}{2}\right\rfloor+\left\lfloor
\frac{n-1}{2}\right\rfloor-1)i_1} & \cdots & p^{(\left\lfloor \frac{n}{2}%
\right\rfloor-\left\lfloor \frac{n-1}{2}\right\rfloor-1)i_n}q^{(\left\lfloor
\frac{n}{2}\right\rfloor+\left\lfloor \frac{n-1}{2}\right\rfloor-1)i_n } &
\\
p^{(\left\lfloor \frac{n}{2}\right\rfloor-\left\lfloor \frac{n-1}{2}%
\right\rfloor)i_1}q^{(\left\lfloor \frac{n}{2}\right\rfloor+\left\lfloor
\frac{n-1}{2}\right\rfloor)i_1} & \cdots & p^{(\left\lfloor \frac{n}{2}%
\right\rfloor-\left\lfloor \frac{n-1}{2}\right\rfloor)i_n}q^{(\left\lfloor
\frac{n}{2}\right\rfloor+\left\lfloor \frac{n-1}{2}\right\rfloor)i_n } &
\end{array}%
\right) L_{\Sigma _{k=1}^{n}i_{k}},
\end{align}
where $\left\lfloor n\right\rfloor =Max\{{m\in \mathbf{Z}|m\leq n\}}$ is the
floor function.
\end{lemma}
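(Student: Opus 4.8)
The plan is to argue by induction on $n$. The base case is $n=3$, where the asserted identity is exactly \p{sh-3}. For $n\geq 4$ one feeds the inductive hypothesis into the recursive definition \p{eq:qlbracket}: its right-hand side expresses $[L_{i_{1}},\dots,L_{i_{n}}]$ as a linear combination of the $(n-1)$-brackets $[L_{i_{1}},\dots,\widehat{L}_{i_{s}},\dots,L_{i_{n}}]$, and by the inductive hypothesis each of these equals $C_{n-1}^{(s)}L_{\sigma-i_{s}}$, where $\sigma=\sum_{k}i_{k}$ and $C_{n-1}^{(s)}$ is the prefactor of the statement (for $n-1$) times an $(n-1)\times(n-1)$ determinant $D_{n-1}^{(s)}$ built from the indices $i_{1},\dots,\widehat{i}_{s},\dots,i_{n}$. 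So everything reduces to the operator products $L_{i_{s}}L_{\sigma-i_{s}}$.

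The essential computational input is a formula for the product of two generators in the realization \p{p-q genrator}. Using $a\,g(N)=g(N+1)\,a$ and $a^{\dag}g(N)=g(N-1)\,a^{\dag}$, and iterating the defining relation $aa^{\dag}=qa^{\dag}a+p^{-N}$ to get $a(a^{\dag})^{k}=q^{k}(a^{\dag})^{k}a+[k]_{p,q}\,p^{\,k-1-N}(a^{\dag})^{k-1}$, one obtains
\begin{equation*}
L_{m}L_{r}=p^{\,2N-m}q^{\,r+1}(a^{\dag})^{m+r+2}a^{2}-[r+1]_{p,q}\,p^{\,r+1}\,L_{m+r}.
\end{equation*}
The decisive point is that the ``anomalous'' first term, at $m=i_{s}$ and $r=\sigma-i_{s}$, equals $q^{\sigma+1}(pq)^{-i_{s}}\,\Omega$ with $\Omega:=p^{2N}(a^{\dag})^{\sigma+2}a^{2}$ a fixed operator not depending on $s$; the decomposition $\sigma=i_{s}+(\sigma-i_{s})$ enters only through the scalar. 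Using in addition $[r+1]_{p,q}\,p^{\,r+1}=\bigl((pq)^{r+1}-1\bigr)/(q-p^{-1})$, substituting into \p{eq:qlbracket} writes $[L_{i_{1}},\dots,L_{i_{n}}]$ as a scalar multiple of $\Omega$ plus a scalar multiple of $L_{\sigma}$.

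It then remains to check that the coefficient of $\Omega$ is zero and the coefficient of $L_{\sigma}$ is the claimed one; both are Laplace (cofactor) expansions. Each $s$-sum that occurs has the form $\sum_{s}(-1)^{s+1}p^{\alpha i_{s}}q^{\beta i_{s}}D_{n-1}^{(s)}$, which is the expansion along the top row of the $n\times n$ matrix obtained by placing the row $(p^{\alpha i_{c}}q^{\beta i_{c}})_{c}$ on top of the matrix of $D_{n-1}^{(s)}$. For the exponents coming from the $\Omega$-terms — and for one of the two pieces coming from the $L_{\sigma}$-terms — the precise values $(x,y)=(\tfrac{n-1}{2},-1)$ for odd $n$, $(x,y)=(\tfrac{n}{2},0)$ for even $n$, force this extra row to coincide with a row already present in the $(n-1)$-matrix, so its determinant vanishes; this is exactly the reason the recursion \p{eq:qlbracket} carries those exponents. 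For the remaining piece, comparing $\lfloor\frac{n-2}{2}\rfloor$ with $\lfloor\frac{n-1}{2}\rfloor$ shows that, after pulling a common factor $p^{2i_{c}}$ out of each column when $n$ is odd, the rows of $D_{n-1}^{(s)}$ are precisely the first $n-1$ rows of the $n\times n$ matrix in the statement and the extra row is its last row; the cofactor expansion then reassembles the full $n\times n$ determinant, and gathering the powers of $p$ and $q$ with the help of $\lfloor\frac{n-1}{2}\rfloor+\lfloor\frac{n}{2}\rfloor=n-1$ yields the prefactor $(pq^{-1})^{\lfloor\frac{n-1}{2}\rfloor\sigma}/(q-p^{-1})^{n-1}$ and a sign $sign(n)$ forced to satisfy $sign(n)=-sign(n-1)$ for even $n$ and $sign(n)=sign(n-1)$ for odd $n$, compatibly with $sign(3)=-1$ read off from \p{sh-3}.

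The step I expect to be most delicate is the last one — not because of any deep idea but because of the bookkeeping: matching the rows of $D_{n-1}^{(s)}$ to the first $n-1$ rows of the $n$-determinant uniformly in the odd and even cases (the column rescaling by $p^{2i_{c}}$ is needed only for odd $n$), and keeping track of the three sources of sign — the $(-1)^{s+1}$ in \p{eq:qlbracket}, the sign picked up when the adjoined row is moved to the last position, and the parities hidden inside the floor functions — so that a single clean $sign(n)$ comes out. The operator-product identity and the vanishing of the $\Omega$-coefficient are, by contrast, routine once the Laplace-expansion picture is in place.
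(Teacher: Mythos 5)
Your proposal is correct and follows essentially the same route as the paper's proof: induction on $n$ via the recursion (\ref{eq:qlbracket}), the oscillator product formula splitting $L_{i_{s}}L_{\sigma-i_{s}}$ into an anomalous $p^{2N}(a^{\dag})^{\sigma+2}a^{2}$ piece plus a multiple of $L_{\sigma}$, vanishing of the anomalous coefficient (together with one piece of the $L_{\sigma}$ coefficient) as a cofactor expansion of a determinant with a repeated row, and reassembly of the surviving piece into the full $n\times n$ determinant. The paper only writes out the odd-to-even step and asserts the other parity "follows the same strategy," so your explicit handling of both cases, including the $p^{2i_{c}}$ column rescaling for odd $n$, is if anything slightly more complete.
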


\begin{proof}
We prove the lemma 4 by mathematical induction for $n$. From (\ref{pq3}) it
is satisfied for $n=3$. Suppose (\ref{p-q bracket}) is satisfied for $n=2k-1$%
, from (\ref{eq:qlbracket}), we obtain
\begin{align}
& {[}L_{i_{1}},L_{i_{2}},\cdots ,L_{i_{2k}}]  \notag  \label{sh-odd} \\
& =\sum_{s=1}^{2k}(-1)^{s+1}L_{i_{s}}\ast \lbrack L_{i_{1}},\cdots ,\widehat{%
L}_{i_{s}},\cdots ,L_{i_{2k}}{]}  \notag \\
& =\sum_{s=1}^{2k}(-1)^{s+1}\left( pq\right) ^{xi_{s}+y\left( i_{1}+\cdots +%
\widehat{i}_{s}+\cdots +i_{2k}\right) }\frac{sign\left( 2k-1\right) }{\left(
q-p^{-1}\right) ^{2k-2}}(pq^{-1})^{(k-1)(i_{1}+\cdots +\widehat{i}%
_{s}+\cdots +i_{2k})}  \notag \\
& \det \left(
\begin{array}{ccccc}
p^{-(2k-2)i_{1}} & \cdots  & \widehat{p^{-(2k-2)i_{s}}} & \cdots  &
p^{-(2k-2)i_{2k}} \\
p^{-(2k-3)i_{1}}q^{i_{1}} & \cdots  & \widehat{p^{-(2k-3)i_{s}}q^{i_{s}}} &
\cdots  & p^{-(2k-3)i_{2k}}q^{i_{2k}} \\
\vdots  & \vdots  & \vdots  & \vdots  & \vdots  \\
p^{-i_{1}}q^{(2k-3)i_{1}} & \cdots  & \widehat{p^{-i_{s}}q^{(2k-3)i_{s}}} &
\cdots  & p^{-i_{2k}}q^{(2k-3)i_{2k}} \\
q^{(2k-2)i_{1}} & \cdots  & \widehat{q^{(2k-2)i_{s}}} & \cdots  &
q^{(2k-2)i_{2k}}%
\end{array}%
\right) ((pq)^{(i_{1}+\cdots +\widehat{i}_{s}+\cdots +i_{n})}  \notag \\
& (p^{2N-\Sigma _{m=1}^{2k}i_{m}}q(a^{\dag })^{\Sigma
_{m=1}^{2k}i_{m}+2}a^{2}-\frac{pq}{q-p^{-1}}L_{\Sigma _{m=1}^{2k}i_{m}})+%
\frac{1}{q-p^{-1}}L_{\Sigma _{m=1}^{2k}i_{m}})
\end{align}%
For $n$ is even, we substitute $x=k,y=0$ into (\ref{sh-odd}), then (\ref%
{sh-odd}) is divided into two parts
\begin{align}
& {[}L_{i_{1}},L_{i_{2}},\cdots ,L_{i_{2k}}]  \notag \\
& =\frac{sign\left( 2k-1\right) }{\left( q-p^{-1}\right) ^{2k-2}}%
(p^{k}q^{(2-k)})^{\sum_{m=1}^{2k}i_{m}}q^{(2k-2)i_{s}}  \notag \\
& \det \left(
\begin{array}{ccccc}
p^{-(2k-2)i_{1}} & \cdots  & \widehat{p^{-(2k-2)i_{s}}} & \cdots  &
p^{-(2k-2)i_{2k}} \\
p^{-(2k-3)i_{1}}q^{i_{1}} & \cdots  & \widehat{p^{-(2k-3)i_{s}}q^{i_{s}}} &
\cdots  & p^{-(2k-3)i_{2k}}q^{i_{2k}} \\
\vdots  & \vdots  & \vdots  & \vdots  & \vdots  \\
p^{-i_{1}}q^{(2k-3)i_{1}} & \cdots  & \widehat{p^{-i_{s}}q^{(2k-3)i_{s}}} &
\cdots  & p^{-i_{2k}}q^{(2k-3)i_{2k}} \\
q^{(2k-2)i_{1}} & \cdots  & \widehat{q^{(2k-2)i_{s}}} & \cdots  &
q^{(2k-2)i_{2k}}%
\end{array}%
\right) ((pq)^{(i_{1}+\cdots +\widehat{i}_{s}+\cdots +i_{n})}  \notag
\end{align}%
\begin{align}
& (p^{2N-\Sigma _{m=1}^{2k}i_{m}}q(a^{\dag })^{\Sigma
_{m=1}^{2k}i_{m}+2}a^{2}-\frac{pq}{q-p^{-1}}L_{\Sigma _{m=1}^{2k}i_{m}})+%
\frac{sign(2k)}{(q-p^{-1})^{2k-1}}(pq^{-1})^{(k-1)\Sigma
_{m=1}^{2k}i_{m}}q^{(2k-1)i_{s}}p^{i_{s}}  \notag  \label{26} \\
& \det \left(
\begin{array}{ccccc}
p^{-(2k-2)i_{1}} & \cdots  & \widehat{p^{-(2k-2)i_{s}}} & \cdots  &
p^{-(2k-2)i_{2k}} \\
p^{-(2k-3)i_{1}}q^{i_{1}} & \cdots  & \widehat{p^{-(2k-3)i_{s}}q^{i_{s}}} &
\cdots  & p^{-(2k-3)i_{2k}}q^{i_{2k}} \\
\vdots  & \vdots  & \vdots  & \vdots  & \vdots  \\
p^{-i_{1}}q^{(2k-3)i_{1}} & \cdots  & \widehat{p^{-i_{s}}q^{(2k-3)i_{s}}} &
\cdots  & p^{-i_{2k}}q^{(2k-3)i_{2k}} \\
q^{(2k-2)i_{1}} & \cdots  & \widehat{q^{(2k-2)i_{s}}} & \cdots  &
q^{(2k-2)i_{2k}}%
\end{array}%
\right)
\end{align}%
The first part of (\ref{26}) is vanished. Then we obtain
\begin{multline}
{[}L_{i_{1}},L_{i_{2}},\cdots ,L_{i_{2k}}]=\frac{sign(2k)}{(q-p^{-1})^{2k-1}}%
(pq^{-1})^{(k-1)\Sigma _{m=1}^{2k}i_{m}} \\
\cdot \det \left(
\begin{array}{cccc}
p^{-(2k-2)i_{1}} & p^{-(2k-2)i_{2}} & \cdots  & p^{-(2k-2)i_{2k}} \\
p^{-(2k-3)i_{1}}q^{i_{1}} & p^{-(2k-3)i_{2}}q^{i_{2}} & \cdots  &
p^{-(2k-3)i_{2k}}q^{i_{2k}} \\
\vdots  & \vdots  & \vdots  & \vdots  \\
p^{-i_{1}}q^{(2k-3)i_{1}} & p^{-i_{2}}q^{(2k-3)i_{2}} & \cdots  &
p^{-i_{2k}}q^{(2k-3)i_{2k}} \\
q^{(2k-2)i_{1}} & q^{(2k-2)i_{2}} & \cdots  & q^{(2k-2)i_{2k}} \\
p^{i_{1}}q^{(2k-1)i_{1}} & p^{i_{2}}q^{(2k-1)i_{2}} & \cdots  &
p^{i_{2k}}q^{(2k-1)i_{2k}}%
\end{array}%
\right) L_{\Sigma _{m=1}^{2k}i_{m}}.
\end{multline}%
So (\ref{p-q bracket}) is satisfied for $n=2k$. We can obtain (\ref{p-q
bracket}) is satisfied for $n=2k-1$ following the above strategy.
\end{proof}

\begin{theorem}
When $n\geq3$, the $(p,q)$-n-bracket relation (\ref{p-q bracket}) is a sh-$n$%
-Lie algebra.
\end{theorem}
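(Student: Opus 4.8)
\ The idea is to use the closed form (\ref{p-q bracket}) to collapse the nested brackets, reduce the sh-Jacobi identity (\ref{eq:shJacobi}) to a scalar identity in the variables $z_j:=(pq)^{i_j}$, and then prove that scalar identity by recognising each of its homogeneous components as a generalized Laplace expansion of a determinant with two equal rows. Skew-symmetry of (\ref{p-q bracket}) is immediate: interchanging $i_a$ and $i_b$ leaves the scalar prefactor $(pq^{-1})^{\lfloor\frac{n-1}{2}\rfloor(i_1+\cdots+i_n)}$ and the generator $L_{i_1+\cdots+i_n}$ unchanged while transposing two columns of the determinant.

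First I would rewrite (\ref{p-q bracket}) compactly. Evaluating the Vandermonde-type determinant there (its $(r,s)$ entry is $p^{-2ai_s}(pq)^{(r-1)i_s}$, where $a:=\lfloor\frac{n-1}{2}\rfloor$) gives $[L_{i_1},\dots,L_{i_n}]=c(i_1,\dots,i_n)\,L_{i_1+\cdots+i_n}$ with
\[
c(i_1,\dots,i_n)=\kappa_n\,(pq)^{-a(i_1+\cdots+i_n)}\!\!\prod_{1\le r<s\le n}\!\!(z_s-z_r),\qquad \kappa_n:=\frac{sign(n)}{(q-p^{-1})^{n-1}},\quad z_j:=(pq)^{i_j}.
\]
Because every bracket is a scalar multiple of a single generator, nested brackets collapse,
\[
[[L_{i_1},\dots,L_{i_n}],L_{i_{n+1}},\dots,L_{i_{2n-1}}]=c(i_1,\dots,i_n)\,c(i_1+\cdots+i_n,\,i_{n+1},\dots,i_{2n-1})\,L_{i_1+\cdots+i_{2n-1}},
\]
so, $L_{i_1+\cdots+i_{2n-1}}$ being common to all terms of (\ref{eq:shJacobi}), the theorem reduces to proving that $\sum_{\sigma\in Sh(n,n-1)}(-1)^{\epsilon(\sigma)}\,c(i_{\sigma(1)},\dots,i_{\sigma(n)})\,c(\sum_{r\le n}i_{\sigma(r)},\,i_{\sigma(n+1)},\dots,i_{\sigma(2n-1)})=0$ for all integers $i_1,\dots,i_{2n-1}$.

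Next I would extract the common factor. For a shuffle $\sigma$ put $A=\{\sigma(1)<\cdots<\sigma(n)\}$, $B=A^{c}$, $w_A=\prod_{j\in A}z_j$, and $V(S)=\prod_{j<l\ \text{in}\ S}(z_l-z_j)$. Then $c(i_{\sigma(1)},\dots,i_{\sigma(n)})=\kappa_n\,w_A^{-a}\,V(A)$; and in the second factor the first (merged) entry of the Vandermonde product has value $w_A=(pq)^{\sum_{j\in A}i_j}$ while the overall $(pq)$--weight $(pq)^{-a(i_1+\cdots+i_{2n-1})}$ is the same for every $\sigma$, so $c(\sum_{r\le n}i_{\sigma(r)},\,i_{\sigma(n+1)},\dots,i_{\sigma(2n-1)})=\kappa_n\,(pq)^{-a(i_1+\cdots+i_{2n-1})}\,V(B)\prod_{j\in B}(z_j-w_A)$. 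Cancelling the constant $\kappa_n^{2}(pq)^{-a(i_1+\cdots+i_{2n-1})}$ and expanding $\prod_{j\in B}(z_j-w_A)=\sum_{k=0}^{n-1}(-1)^k e_{n-1-k}(z_B)\,w_A^{\,k}$ in the elementary symmetric polynomials of $z_B:=(z_j)_{j\in B}$, the identity reduces to: for every $k\in\{0,\dots,n-1\}$,
\[
\Sigma_k:=\sum_{\substack{A\sqcup B=\{1,\dots,2n-1\}\\ |A|=n}} (-1)^{\epsilon(\sigma_{A,B})}\,\big(V(A)\,w_A^{\,k-a}\big)\,\big(V(B)\,e_{n-1-k}(z_B)\big)=0 ,
\]
$\sigma_{A,B}$ being the shuffle attached to the splitting $(A,B)$.

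Finally I would identify both bracketed factors with alternants. On one hand $V(A)\,w_A^{\,k-a}=\det\!\big(z_a^{\,e}\big)_{a\in A,\ e\in\{k-a,\,\dots,\,k-a+n-1\}}$, a Vandermonde alternant shifted by the (possibly negative) integer $k-a$. On the other hand, multiplying $V(B)$ by $e_{n-1-k}(z_B)$ produces, by the bialternant formula for Schur polynomials, $\pm\det\!\big(z_b^{\,e}\big)_{b\in B,\ e\in\{0,\dots,n-1\}\setminus\{k\}}$. The generalized Laplace expansion along the first $n$ rows then identifies $\Sigma_k$, up to a fixed global sign, with the $(2n-1)\times(2n-1)$ determinant $\det\!\big(z_j^{\,e}\big)_{1\le j\le 2n-1,\ e\in E_k}$, where $E_k$ is the multiset $\{k-a,\dots,k-a+n-1\}\sqcup\big(\{0,\dots,n-1\}\setminus\{k\}\big)$. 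Because $a=\lfloor\frac{n-1}{2}\rfloor$, the window of $n$ consecutive integers $\{k-a,\dots,k-a+n-1\}$ always contains $k$ together with at least one further element of $\{0,\dots,n-1\}$ (namely $k-1$ when $k\ge1$ and $k+1$ when $k\le n-2$), so $E_k$ has a repeated exponent and the determinant vanishes; a short check distinguishing the parities of $n$ confirms this for all $n\ge3$. Hence every $\Sigma_k=0$, the scalar identity holds, and $(V,[\cdot,\dots,\cdot])$ is a sh-$n$-Lie algebra. I expect the main difficulty to be organisational rather than conceptual: one needs the closed form (\ref{p-q bracket}) in hand, must track the shuffle signs so that exactly $(-1)^{\epsilon(\sigma)}$ appears in the Laplace step, and must verify the elementary but parity-sensitive fact that $E_k$ always contains a repeat — precisely the point where the floor-function exponents built into (\ref{p-q bracket}) are used, and the reason the construction behaves differently for odd and even $n$.
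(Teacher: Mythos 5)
Your proof is correct and is, at bottom, the same argument the paper gives: the paper's Levi--Civita computation (its sum over $s$ with the exponent vector $\alpha$ of (\ref{alpha}), killed because two of the coefficients of the $j_\mu$ coincide and $\epsilon$ is totally antisymmetric) is exactly your Laplace expansion in powers of $w_A$, with each homogeneous component vanishing because the exponent multiset $E_k$ contains a repeat, i.e.\ the determinant has two equal columns. Your write-up is the more carefully executed of the two --- you actually verify the parity-sensitive overlap of the window $\{k-a,\dots,k-a+n-1\}$ with $\{0,\dots,n-1\}\setminus\{k\}$ for all $n\geq 3$ and treat both parities uniformly, where the paper only asserts the coincidence and defers the even case --- but it is essentially the same approach.
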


\begin{proof}
Firstly, we prove the $n$-bracket (\ref{p-q bracket}) satisfies the
sh-Jacobi identity (\ref{emisila sh}) for odd $n$. From the L$\acute{e}$%
vi-Civit$\grave{a}$ symbol (\ref{eq:LV}), we express (\ref{p-q bracket}) as
\begin{eqnarray}  \label{Levi-n}
{[}L_{i_{1}},\cdots ,L_{i_{n}}]&=&\frac{sign\left( n\right)(pq^{-1})^{\frac{%
n-1}{2}({i_1+\cdots+i_n})} }{\left( q-p^{-1}\right) ^{n-1}}\epsilon
_{i_{1}\cdots i_{n}}^{j_{1}\cdots
j_{n}}p^{-(n-1)j_{1}}p^{-(n-2)j_2}q^{j_2}\cdots p^{-j_{n-1}}  \notag \\
&&\cdot q^{(n-2)j_{n-1}} q^{(n-1)j_n}L_{\Sigma _{m=1}^{2n+1}i_{m}}.
\end{eqnarray}
It is clear that the skew-symmetry holds for the $n$-bracket (\ref{p-q
bracket}) for the skew-symmetry of $\epsilon _{i_{1}\cdots
i_{n}}^{j_{1}\cdots j_{n}}$. Substituting (\ref{Levi-n}) into the left hand
side of (\ref{emisila sh}), we have
\begin{align}
& \epsilon _{l_1\cdots l_{2n-1}}^{i_1\cdots i_{2n-1}} {[}{[}
L_{i_{1}},\cdots ,L_{i_{n}}],L_{i_{n+1}},\cdots ,L_{i_{2n-1}}]  \notag \\
&=\frac{1}{(q-p^{-1})^{2n-2}} \sum_{s=0}^{n-1}\epsilon _{i_1\cdots
i_{n}}^{j_1\cdots j_{n}}\epsilon _{i_{n+1}\cdots i_{2n-1}}^{j_{n+1}\cdots
j_{2n-1}}(pq)^\alpha L_{\Sigma_{m=1}^{2n-1}i_m}  \notag \\
&=\frac{n!(n-1)!}{(q-p^{-1})^{2n-2}}\sum_{s=0}^{n-1}\epsilon _{l_1\cdots
l_{2n-1}}^{j_1\cdots j_{2n-1}}(pq)^\alpha L_{\Sigma_{m=1}^{2n-1}i_m},
\end{align}
where the power $\alpha$ of $(pq)$ is given by
\begin{align}  \label{alpha}
\alpha=&(s-n+1)j_1+(s-n+2)j_2+\cdots+sj_n-\frac{n-1}{2}j_{n+1}+(-\frac{n-1}{%
2 }+1)j_{n+2}+\cdots  \notag \\
&+(-\frac{n-1}{2}+s-1)j_{n+s}+\cdots+\frac{n-1}{2}j_{2n-1}.
\end{align}

Taking $s$ from 0 to $2n-1$, we obtain the coefficients of two different $%
j_{\mu }(\mu =1,\cdots,2n-1)$ should be equal by means of (\ref{alpha}). The
sh-Jacobi's identity (\ref{emisila sh}) equals zero because $\epsilon
_{l_1\cdots l_{2n-1}}^{j_1\cdots j_{2n-1}} $is completely antisymmetric. It
indicates that the sh-Jacobi's identity is satisfied by (\ref{p-q bracket})
with odd $n $. The theorem can be also proved for even $n$ with similar
procedure.

Base on the above proof, we derive the $(p,q)$-$n$-bracket relation(\ref{p-q
bracket}) is a sh $n$-algebra for $n>3$.
\end{proof}

In the limit $p\rightarrow q$, the $n$-bracket reduces to the $q$-deformed
V-W $n$-algebra for $n$ odd. It has been proved that the $q $-deformed V-W $%
n $-algebra is not an $n$-Lie algebra in \cite{Ding16}. Hence it is quite
clear that the $(p,q)$-deformed $n$-algebra (\ref{p-q bracket}) is not an $n$%
-Lie algebra for $n$ odd. For the case of $n$ even, choosing $%
Y_{i}=L_{-i-1},i=1,2,...,n-2,Y_{n-1}=L_{\frac{n\left( n-1\right) }{2}}$ and $%
X_{j}=L_{j-1},j=1,2,...,n$ in $\left( \ref{eq:eFI}\right) ,$ we can find the
Filippov condition $\left( \ref{eq:eFI}\right) $ is not satisfied. Hence the
$\left( p,q\right) $-deformed V-W $n$-algebra is a sh-$n$-Lie algebra but
not an $n$-Lie algebra.

\section{$\mathbf{n}$-Lie subalgebra}

Since the $n$-Lie algebra is connected with the integral system, then it is
interesting and necessary to investigate whether there exists $n$-Lie
subalgebras in the $(p,q)$-deformed $n$-algebra (\ref{p-q bracket}). Here,
let us consider it.

\begin{lemma}
\label{lemma}Suppose $A$ is a finite-dimensional $n$-Lie subalgebra of the $(p,q)$%
-deformed $n$-algebra (\ref{p-q bracket}). If $A$ admits a basis as $%
\{L_{i_{1}},...,L_{i_{k}}\}$, then $\dim A\leq
n+1.$
\end{lemma}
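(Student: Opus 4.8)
The plan is to argue by contradiction: assume $A$ is an $n$-Lie subalgebra with basis $\{L_{i_1},\dots,L_{i_k}\}$ and $k\geq n+2$, and derive a violation of the fundamental identity. The crucial structural observation from Lemma~4 (equation~(\ref{p-q bracket})) is that the $n$-bracket of any $n$ generators $L_{j_1},\dots,L_{j_n}$ is a scalar multiple of $L_{j_1+\cdots+j_n}$, where the scalar is a $(p,q)$-dependent determinant of Vandermonde-type structure. Hence for $A$ to be closed under the bracket, we need $L_{i_{a_1}+\cdots+i_{a_n}}\in A$ whenever the corresponding determinantal coefficient is nonzero; and that determinant vanishes exactly when two of the chosen indices coincide (the columns of the matrix in (\ref{p-q bracket}) depend only on $i_a$, so equal indices give equal columns). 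So the first step is to record: for distinct $i_{a_1},\dots,i_{a_n}$ among the basis indices, $i_{a_1}+\cdots+i_{a_n}$ must again be a basis index, and the structure constant is nonzero.

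Next I would exploit this closure combinatorially. Order the basis indices $i_1<i_2<\cdots<i_k$. Since $k\geq n+2$, I can form several distinct $n$-subsets whose sums are forced to land back in $\{i_1,\dots,i_k\}$; comparing the smallest and largest achievable sums, $i_1+\cdots+i_{n-1}+i_{n}$ versus $i_1+\cdots+i_{n-1}+i_k$ and similar, pins the index set into an arithmetic-progression-like shape, and in fact shows $k$ cannot be much larger than $n$ without the sum set escaping $\{i_1,\dots,i_k\}$ or forcing a collision. The cleaner route, which I would actually carry out, is to feed a well-chosen tuple into the Filippov identity (\ref{eq:eFI}) directly: take $Y_1,\dots,Y_{n-1}$ and $X_1,\dots,X_n$ to be $n+ (n-1)$ of the basis elements with generic distinct indices (possible once $k\geq n+2$ gives enough room to also keep $[Y_1,\dots,Y_{n-1},X_\ell]$ landing on indices still inside the span), expand both sides using (\ref{p-q bracket}) twice, and observe that each side collapses to a scalar times a single generator $L_{\,i_1+\cdots}$ — but the two scalars are products/sums of the Vandermonde determinants evaluated at shifted arguments, and they fail to agree as rational functions of $p,q$. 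This mirrors exactly the computation already sketched at the end of Section~3 showing (\ref{p-q bracket}) is not an $n$-Lie algebra for $n$ even, with the concrete choice $Y_i=L_{-i-1}$, $X_j=L_{j-1}$; the point here is that any $(n+2)$-element index set contains a sub-configuration on which that same obstruction is visible.

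The main obstacle, and where care is needed, is the bookkeeping of which index sums are forced to be basis indices versus which are allowed to be zero structure constants: one must make sure that when $k\geq n+2$ there genuinely exists a choice of $Y$'s and $X$'s for which \emph{all} the inner brackets on the right-hand side of (\ref{eq:eFI}) have nonvanishing coefficient (so the identity is not satisfied vacuously), while still having enough distinct indices available. I would handle this by first proving the easy bound: if $k\geq n+1$ then the $\binom{k}{n}$ sums $i_{a_1}+\cdots+i_{a_n}$ of distinct $n$-subsets must all lie in the $k$-element set $\{i_1,\dots,i_k\}$, and a counting/extremal argument on these sums (the map from $n$-subsets to their sums is far from injective only if the indices are in arithmetic progression) shows that for $k=n+1$ this is consistent (giving the tight example realizing $\dim A=n+1$) but for $k\geq n+2$ it is already impossible on purely additive grounds — independent of $p,q$. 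That additive contradiction is the heart of the proof; the Filippov-identity route is the fallback if the additive count turns out to leave a loophole. Either way, the conclusion $\dim A\leq n+1$ follows, and one exhibits an explicit $(n+1)$-dimensional subalgebra (indices forming an arithmetic progression through which all $n$-fold sums stay inside) to show the bound is sharp.
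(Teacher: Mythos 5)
Your opening observation is exactly the one the paper's proof rests on: since the $n$-bracket of $n$ generators with distinct indices is a nonzero (generalized Vandermonde) multiple of $L_{i_{a_1}+\cdots+i_{a_n}}$, and the $L_j$ with distinct $j$ are linearly independent, closure forces every such sum to be again a basis index. The genuine gap is that you never actually derive a contradiction from this for $k\geq n+2$: you offer two routes (a counting/extremal argument on the set of $n$-fold sums, and a violation of the Filippov identity) and explicitly hedge between them, but neither is carried out. The counting route can in fact be made to work --- walking from the minimal $n$-subset $\{i_1,\dots,i_n\}$ to the maximal one $\{i_{k-n+1},\dots,i_k\}$ by single swaps, each of which strictly increases the sum, produces at least $n(k-n)+1$ distinct sums, all of which must lie in the $k$-element index set, forcing $n(k-n)+1\le k$, i.e.\ $k\le n+1$ --- but that inequality, or something equivalent, is the entire content of the lemma and is absent from your write-up; your parenthetical about arithmetic progressions does not substitute for it. The Filippov-identity fallback is both unnecessary for this lemma (which needs only closure plus finite-dimensionality) and much harder to control, since you would have to guarantee simultaneous nonvanishing of many structure constants.

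For comparison, the paper's argument is also purely additive but structured differently: it first shows $i_1+\cdots+i_n$ must equal some $i_{s_1}$ with $s_1\le n$ (otherwise iterating the bracket produces a strictly increasing infinite sequence of indices, contradicting finite dimension), deduces $i_1+\cdots+\widehat{i}_{s_1}+\cdots+i_n=0$ and hence a positive partial sum after dropping $i_1<0$, and then brackets with the two largest generators $L_{i_{k-1}},L_{i_k}$ to produce an index strictly larger than $i_k$, which cannot lie in the basis. One further caution: your closing claim that the bound is sharp, witnessed by an explicit $(n+1)$-dimensional $n$-Lie subalgebra, contradicts the paper's Theorem 6, which shows every such subalgebra is $n$-dimensional; for $k=n+1$ the additive closure constraints can indeed be met (indices symmetric about $0$ with total sum $0$), but the resulting bracket fails to be an $n$-Lie algebra.
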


\begin{proof}
Assume that $i_{1}<\cdots <i_{k}.$ If the assertion $\dim A\leq n+1$ would
not hold, then $k>n+1.$ Since $\{L_{i_{1}},..,L_{i_{n}}\}\in A,$ then by the
Lie bracket $\left( \ref{p-q bracket}\right) ,$ $L_{i_{1}+\cdots +i_{n}}\in
A $. Hence there exists $s_{1}(1\leq s_{1}\leq k)$ such that
\begin{equation}
i_{1}+\cdots +i_{n}=i_{s_{1}}.
\end{equation}
If $s_{1}>n,$ then $i_{1}+\cdots +i_{n-1}>0.$ Hence $i_{n}>0$ for $%
i_{1}<\cdots <i_{n}.$ Note that $\{L_{i_{1}},..,L_{i_{n-1}},L_{i_{s_{1}}}\}
\in V,$ by the similar method above, there exists $s_{2}(1\leq s_{2}\leq k)$
such that
\begin{equation}
i_{1}+\cdots +i_{n-1}+i_{s_{1}}=i_{s_{2}}\ \ \ \ \text{and} \ \ \ \
L_{i_{s_{2}}}\in V.
\end{equation}
By $i_{1}+\cdots +i_{n-1}>0,$ we get $i_{s_{2}}>i_{s_{1}}.$ Continue to do
this, we have \{$L_{i_{s_{1}}},L_{i_{s_{2}}},L_{i_{s_{3}}},...\in A$ and $%
i_{s_{1}}<i_{s_{2}}<i_{s_{3}}<....$\}. Clearly, it is contract with $A$ is a
finite dimensional algebra.

Now we have
\begin{equation}
i_{1}+\cdots +i_{n}=i_{s_{1}},s_{1}\leq n.  \label{111}
\end{equation}
From (\ref{111}) and $i_{1}<\cdots <i_{n}$, we get $i_1<0$. From $s_{1}\geq1$
we obtain $i_{n}>0$. By (\ref{111}) and $s_{1}>1,$ $i_{1}+\cdots +\widehat{i}%
_{s_{1}}+\cdots +i_{n}=0.$ Note that $i_{1}<0,$ we have $i_{2}+\cdots
+\widehat i_{s_{1}}+\cdots +i_{n}>0.$ Since $\dim A>n+1$ and $i_{n}>0,$ then
\begin{equation}
i_{k}>i_{k-1}>i_{n}>0.  \label{112}
\end{equation}
By the Lie bracket $\left( \ref{p-q bracket}\right) $ and $\left( \ref{112}
\right) ,$ we get
\begin{equation}
L_{i_{2}+\cdots +\widehat{i}_{s_{1}}+\cdots +i_{n}+i_{k-1}+i_{k}}\in A.
\end{equation}
and
\begin{equation}
i_{2}+\cdots +\widehat{i}_{s_{1}}+\cdots +i_{n}+i_{k-1}+i_{k}>i_{k}.
\end{equation}
It is contract with the fact that $L_{i_{1}},...,L_{i_{k}}$ are the
generators of $A$. Now we can get the conclusion $\dim A\leq n+1.$
\end{proof}

\begin{theorem}
For any finite-dimensional $n$-Lie algebra $A$ in the $(p,q)$-deformed $n$%
-algebra (\ref{p-q bracket}). If $A$ admits the basis as $%
\{L_{i_{1}},...,L_{i_{k}}\},$ then $A$ is $n$-dimensional and isomorphic to
the following $n$-Lie algebra
\begin{equation}
\left[ B_{1},...,B_{n}\right] =B_{1}.  \label{B Fi algebra}
\end{equation}
\end{theorem}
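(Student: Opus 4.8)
The plan is to combine Lemma~\ref{lemma} with the fundamental identity (\ref{eq:eFI}). By Lemma~\ref{lemma} we already have $k=\dim A\le n+1$, and we may assume the $n$-bracket is not identically zero on $A$ (the only interesting case); since a non-zero value of (\ref{p-q bracket}) requires $n$ pairwise distinct indices, this forces $k\ge n$. So only $k=n$ and $k=n+1$ remain. For $k=n$, write $e_j:=L_{i_j}$ with $i_1<\cdots<i_n$; every bracket of basis vectors other than $[e_1,\dots,e_n]$ and its reorderings has a repeated argument and hence vanishes, while by (\ref{p-q bracket}) the bracket $[e_1,\dots,e_n]$ equals a non-zero multiple $c\,L_{i_1+\cdots+i_n}$ of a single generator. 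Closure forces $i_1+\cdots+i_n=i_s$ for some $s$, so $[e_1,\dots,e_n]=c\,e_s$ with $c\ne 0$. Setting $B_1:=e_s$, taking $B_2,\dots,B_n$ to be the remaining $e_j$ in a fixed order, and rescaling $B_2$ by a suitable scalar turns this into $[B_1,\dots,B_n]=B_1$; the isomorphism type manifestly does not depend on $s$ or on $c$, so $A\cong(\ref{B Fi algebra})$. (One checks directly that (\ref{B Fi algebra}) does satisfy (\ref{eq:eFI}), so such subalgebras occur, e.g.\ $\mathrm{span}\{L_{-1},L_0,L_1\}$ when $n=3$.)

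The substance of the proof is to rule out $k=n+1$. Assume $k=n+1$, put $e_j:=L_{i_j}$ with $i_1<\cdots<i_{n+1}$ and $\Sigma:=i_1+\cdots+i_{n+1}$. Deleting each $e_t$ and invoking (\ref{p-q bracket}) together with closure, $\Sigma-i_t\in\{i_1,\dots,i_{n+1}\}$ for every $t$; as the $i_j$ are distinct, $t\mapsto s(t)$ with $i_{s(t)}=\Sigma-i_t$ is an order-reversing involution of $\{1,\dots,n+1\}$, hence $s(t)=n+2-t$. Therefore $i_t+i_{n+2-t}=\Sigma$ for all $t$, and summing over $t$ gives $2\Sigma=(n+1)\Sigma$, so $\Sigma=0$ (because $n\ge 3$) and the index set is antisymmetric, $i_{n+2-t}=-i_t$. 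Write $[e_1,\dots,\widehat{e_t},\dots,e_{n+1}]=c_t\,e_{n+2-t}$ with $c_t\ne 0$. Now evaluate (\ref{eq:eFI}) on $X_j=e_j$ $(1\le j\le n)$ and $Y_j=e_{j+2}$ $(1\le j\le n-1)$: a bookkeeping of skew-symmetry signs gives left-hand side $(-1)^{n-1}c_{n+1}c_2\,e_n$, while on the right-hand side every summand with $X_k\notin\{e_1,e_2\}$ vanishes, the $X_k=e_1$ summand vanishes by a repeated argument, and the $X_k=e_2$ summand contributes $-c_1c_2\,e_n$. Hence (\ref{eq:eFI}) forces $c_1=(-1)^n c_{n+1}$.

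This relation contradicts the explicit structure constants. By (\ref{p-q bracket}) together with $\Sigma=0$ and $i_{n+1}=-i_1$, we have $c_1=\kappa\,(pq^{-1})^{-\lfloor\frac{n-1}{2}\rfloor i_1}\,D(i_2,\dots,i_{n+1})$ and $c_{n+1}=\kappa\,(pq^{-1})^{\lfloor\frac{n-1}{2}\rfloor i_1}\,D(i_1,\dots,i_n)$, where $\kappa$ is a non-zero constant depending only on $n,p,q$ and $D(\cdot)$ denotes the generalized Vandermonde determinant in (\ref{p-q bracket}); each such $D$ is a non-zero Laurent polynomial in $p,q$ (this non-vanishing is precisely the non-degeneracy of the bracket used throughout). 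Thus $c_1=(-1)^n c_{n+1}$ would force $D(i_2,\dots,i_{n+1})=(-1)^n(pq^{-1})^{2\lfloor\frac{n-1}{2}\rfloor i_1}D(i_1,\dots,i_n)$, which is false: already at $p=q$ and $n=3$ with $\{i_1,\dots,i_4\}=\{-2,-1,1,2\}$ one computes $D(i_2,i_3,i_4)=D(i_1,i_2,i_3)\ne 0$, so the required relation $c_1=-c_{n+1}$ cannot hold. This contradiction eliminates $k=n+1$, leaving $k=n$ and $A\cong(\ref{B Fi algebra})$.

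The step I expect to be the main obstacle is this last comparison: one has to choose the arguments of the Filippov identity so that the surviving relation among the $c_t$ is simple enough to be tested against the determinantal formula underlying (\ref{p-q bracket}), and one must be sure the relevant generalized Vandermonde determinants are genuinely non-zero for generic $(p,q)$ — which, as noted, is already part of why (\ref{p-q bracket}) defines a non-trivial bracket.
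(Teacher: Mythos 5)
Your overall strategy tracks the paper's quite closely: both proofs use the preceding lemma to get $\dim A\le n+1$, dispose of $k=n$ by observing that closure forces $i_1+\cdots+i_n=i_s$ and the resulting one-relation algebra is isomorphic to $[B_1,\dots,B_n]=B_1$, and then rule out $k=n+1$ by extracting from the $n$-Lie condition a sign relation between two structure constants and arguing it is incompatible with the determinantal formula (\ref{p-q bracket}). Where you differ is in how that sign relation is obtained: the paper assembles all the brackets $L^j=(-1)^{n+j+1}[L_{i_1},\dots,\widehat{L}_{i_j},\dots,L_{i_{n+1}}]$ into a matrix and invokes Filippov's Theorem 3 (the bracket is $n$-Lie iff that matrix is symmetric), whereas you evaluate one well-chosen instance of (\ref{eq:eFI}) by hand. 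Your preliminary observation that closure forces $\Sigma=0$ and $i_{n+2-t}=-i_t$ (via the order-reversing involution) is a nice sharpening that the paper does not make explicit, and your sign bookkeeping for the chosen FI instance checks out: the left side is $(-1)^{n-1}c_{n+1}c_2\,e_n$, the only surviving right-hand term is $-c_1c_2\,e_n$, giving $c_1=(-1)^nc_{n+1}$.

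The genuine gap is at the decisive final step. The statement you need is universally quantified -- for \emph{every} $n\ge 3$, \emph{every} admissible symmetric index set $\{i_1,\dots,i_{n+1}\}$ with $\Sigma=0$, and the relevant range of $(p,q)$, the relation $D(i_2,\dots,i_{n+1})=(-1)^n(pq^{-1})^{2\lfloor\frac{n-1}{2}\rfloor i_1}D(i_1,\dots,i_n)$ must fail -- but you verify it only for $n=3$, $p=q$, and the single index set $\{-2,-1,1,2\}$. A single instance cannot discharge the general claim, and the even-$n$ case is genuinely untested by your example: there the sign is $(-1)^n=+1$, so if (as your own $n=3$ computation suggests, and as the symmetry $\{i_2,\dots,i_{n+1}\}=-\{i_1,\dots,i_n\}$ makes plausible in general) the two generalized Vandermonde determinants coincide at $p=q$, the obstruction you rely on could evaporate exactly where the sign flips, and the contradiction would have to come from the $p\ne q$ dependence instead. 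What is missing is a general identity relating $D(i_2,\dots,i_{n+1})$ to $D(i_1,\dots,i_n)$ under $i\mapsto -i$ (e.g.\ via the factorized Vandermonde form in the variables $q^{2i_j}$, $p^{-2i_j}$), from which the failure of the required relation can be read off for all $n$ and all index sets. To be fair, the paper is equally terse at the corresponding point -- it simply asserts that its condition (\ref{117}) ``is impossible for $k_1\ne k_2$'' without justification -- so you have reproduced the paper's weakness rather than introduced a new one; but as a self-contained proof your argument is incomplete until that impossibility is established in general.
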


\begin{proof}
Firstly, let us give a $n$-dimensional $n$-Lie algebra in the $\left(
p,q\right) $-deformed $n$-algebra (\ref{p-q bracket}).

If $n=2k,$ we have the subalgebra
\begin{equation}
[ L_{-k+1},L_{-k+2},\cdots ,L_{0},\cdots ,L_{k}]=\frac{ sign\left( 2k\right)
}{\left( q-p^{-1}\right) ^{2k-1}}(pq^{-1})^{k^2-k} M_{1}L_{k}.  \label{113}
\end{equation}

If $n=2k+1,$ we have the subalgebra
\begin{equation}
[L_{-k},L_{-k+1},\cdots ,L_{0},\cdots ,L_{k}]=\frac{ sign\left( 2k+1\right)
}{\left( q-p^{-1}\right) ^{2k}}M_{2}L_{0},  \label{114}
\end{equation}
where
\begin{eqnarray}
M_1&=&\det \left(
\begin{array}{cccc}
p^{-(2k-2)(-k+1)} & p^{-(2k-2)(-k+2)} & \cdots & p^{-(2k-2)k} \\
p^{-(2k-3)(-k+1)}q^{-k+1} & p^{-(2k-3)(-k+2)}q^{-k+2} & \cdots &
p^{-(2k-3)k}q^k \\
\vdots & \vdots & \vdots & \vdots \\
q^{(2k-2)(-k+1)} & q^{(2k-2)(-k+2)} & \cdots & q^{(2k-2)k} \\
p^{-k+1}q^{(2k-1)(-k+1)} & p^{-k+2}q^{(2k-1)(-k+2)} & \cdots & p^kq^{(2k-1)k}%
\end{array}
\right),  \notag \\
M_2&=&\det \left(
\begin{array}{cccc}
p^{2k^{2}} & p^{2k^{2}-2k} & \cdots & p^{-2k^{2}} \\
p^{(2k-1)k}q^{-k} & p^{(2k-1)(k-1)}q^{-k+1} & \cdots & p^{-(2k-1)k}q^{k} \\
\vdots & \vdots & \vdots & \vdots \\
p^kq^{-(2k-1)k} & p^{k-1}q^{(2k-1)(-k+1)} & \cdots & p^{-k}q^{(2k-1)k} \\
q^{-2k^{2}} & q^{-2k^{2}+2k} & \cdots & q^{2k^{2}}%
\end{array}
\right).
\end{eqnarray}

Obviously, the $n$-algebra with the bracket $\left( \ref{113} \right) $ or $%
\left( \ref{114}\right) $ is isomorphic to $\left( \ref{B Fi algebra}\right)
$ which is anticommutative and satisfies the FI condition. Hence there
exists an $n$-Lie algebra which is isomorphic to $\left( \ref{B Fi algebra}%
\right) $ in the $\left( p,q\right) $-deformed $n$ -algebra (\ref{p-q
bracket}).

Next we will prove that any finite-dimensional $n$-subalgebra in $V$ is
isomorphic to $\left( \ref{B Fi algebra}\right) $.

Since $A$ is a finite-dimension $n$-Lie subalgebra with the generators $%
\{L_{i_{1}},...,L_{i_{k}}|i_{u}\not=i_{v}\,1\leq u,v\leq k\}$. From the
lemma 6, we have $k\leq n+1$.

If $k=n,$ then the $n$-bracket give as (\ref{p-q bracket}) and $%
i_{1}+...+i_{n}=i_{s},1\leq s\leq n.$ Clearly, $(\ref{p-q bracket}) $ is
isomorphic to $\left( \ref{B Fi algebra}\right) .$

If $k=n+1,$ the Lie bracket is%
\begin{equation}
\lbrack L_{i_{1}},\cdots ,\widehat{L}_{i_{s}},\cdots
,L_{i_{n+1}}]=H_{s}L_{i_{1}+\cdots +\widehat{i}_{s}+\cdots +i_{n+1}},
\label{n+1 algebra}
\end{equation}%
where
\begin{equation}
H_{s}=\frac{sign(n)}{(q-p^{-1})^{n-1}}(pq^{-1})^{{[}\frac{n-1}{2}{]}%
(i_{1}+\cdots +\widehat{i}_{s}+\cdots +i_{n+1})}\cdot \det \left(
\begin{array}{ccccc}
H_{1,1} & \cdots & \widehat{H}_{1,s} & \cdots & H_{1,n+1} \\
H_{2,1} & \cdots & \widehat{H}_{2,s} & \cdots & H_{2,n+1} \\
\vdots & \vdots & \vdots & \vdots & \vdots \\
H_{n-1,1} & \cdots & \widehat{H}_{n-1,s} & \cdots & H_{n-1,n+1} \\
H_{n,1} & \cdots & \widehat{H}_{n,s} & \cdots & H_{n,n+1}%
\end{array}%
\right) ,  \label{Ms}
\end{equation}%
and
\begin{align}
& H_{1,s}=q^{-2\left\lfloor \frac{n-1}{2}\right\rfloor i_{s}},\bigskip \
\bigskip \ \ \ \ \ \ \ \ \ \ \ \ \ \ \ \ H_{1,n+1}=q^{-2\left\lfloor \frac{%
n-1}{2}\right\rfloor i_{n+1}},  \notag \\
& H_{2,s}=p^{(-2\left\lfloor \frac{n-1}{2}\right\rfloor +1)i_{s}}q^{i_{s}},\
\ \ \ \ \ \ \ \ \ H_{2,n+1}=p^{(-2\left\lfloor \frac{n-1}{2}\right\rfloor
+1)i_{n+1}}q^{i_{n+1}}, \\
H_{n-1,s}& =p^{(\left\lfloor \frac{n}{2}\right\rfloor -\left\lfloor \frac{n-1%
}{2}\right\rfloor -1)i_{s}}q^{(\left\lfloor \frac{n}{2}\right\rfloor
+\left\lfloor \frac{n-1}{2}\right\rfloor -1)i_{s}},\ \
H_{n-1,n+1}=p^{(\left\lfloor \frac{n}{2}\right\rfloor -\left\lfloor \frac{n-1%
}{2}\right\rfloor -1)i_{n+1}}q^{(\left\lfloor \frac{n}{2}\right\rfloor
+\left\lfloor \frac{n-1}{2}\right\rfloor -1)i_{n+1}}, \\
H_{n,s}& =p^{(\left\lfloor \frac{n}{2}\right\rfloor -\left\lfloor \frac{n-1}{%
2}\right\rfloor )i_{s}}q^{(\left\lfloor \frac{n}{2}\right\rfloor
+\left\lfloor \frac{n-1}{2}\right\rfloor )i_{s}},\ \ \ \ \
H_{n,n+1}=p^{(\left\lfloor \frac{n}{2}\right\rfloor -\left\lfloor \frac{n-1}{%
2}\right\rfloor )i_{n+1}}q^{(\left\lfloor \frac{n}{2}\right\rfloor
+\left\lfloor \frac{n-1}{2}\right\rfloor )i_{n+1}}.
\end{align}%
It is not difficult to show that $\dim A^{1}>2$.

Taking
\begin{equation}
L^{j}=\left( -1\right) ^{n+j+1}[L_{i_{1}},\cdots ,\widehat{L}_{i_{j}},\cdots
,L_{i_{n+1}}],j=1,\cdots ,n+1.  \label{L}
\end{equation}%
the Lie bracket relation is given by the matrix $H$
\begin{equation}
\left( L^{1},\cdots ,L^{n+1}\right) =\left( L_{i_{1}},\cdots
,L_{i_{n+1}}\right) B.  \label{H}
\end{equation}%
According to Theorem 3 in \cite{Filippov7}, (\ref{n+1 algebra}) is an $n$%
-Lie algebra if and only if the matrix $B$ is symmetric.

Substituting (\ref{n+1 algebra}) into (\ref{L}) and by means of (\ref{H}),
we obtain
\begin{equation}
i_{1}+\cdots +\widehat{i}_{s}+\cdots +i_{n+1}=i_{s},s=1,\cdots ,n+1
\label{116}
\end{equation}%
or there exists $1\leq k_{1}<k_{2}\leq n+1$ such that
\begin{equation}
H_{k_{1}}=\left( -1\right) ^{k_{2}-k_{1}}H_{k_{2}}.  \label{117}
\end{equation}%
From $\left( \ref{116}\right) ,i_{1}=i_{2}=\cdots =i_{n+1}$ which is
contradict with to $i_{u}\not=i_{v}.$ And $\left( \ref{117}\right) $ is
impossible for $k_{1}\not=k_{2}.$ Hence there is not $\left( n+1\right) $%
-dimensional subalgebra. Now, we complete the proof.
\end{proof}

\bigskip Note that $\{B_{1}\}$ is an ideal of the $n$-Lie algebra $\left( %
\ref{B Fi algebra}\right) ,$ by the above theorem, it implies that

\begin{corollary}
If $A$ is an $n$-Lie subalgebra defined as in Lemma \ref{lemma}, then $A$ is
not simple.
\end{corollary}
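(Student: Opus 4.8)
The plan is to deduce the corollary directly from the classification obtained in the preceding theorem. By that theorem, every finite-dimensional $n$-Lie subalgebra $A$ of the $(p,q)$-deformed $n$-algebra (with basis among the $L_i$) is $n$-dimensional and isomorphic to the algebra $\left[B_1,\dots,B_n\right]=B_1$ of (\ref{B Fi algebra}). So it suffices to exhibit a nonzero proper ideal of this model $n$-Lie algebra; the notion of simplicity is preserved under isomorphism, so $A$ inherits the non-simplicity.

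First I would recall the definition of an ideal $I$ of an $n$-Lie algebra $A$: a subspace with $[I,A,\dots,A]\subseteq I$ (the bracket of one element of $I$ with $n-1$ arbitrary elements of $A$ lands in $I$). A simple $n$-Lie algebra is one whose bracket is not identically zero and which has no ideals other than $0$ and itself. Then I would verify the claim, already flagged before the corollary, that $\{B_1\}$ — more precisely the one-dimensional subspace $FB_1$ spanned by $B_1$ — is an ideal of (\ref{B Fi algebra}). Indeed, for any $Y_1,\dots,Y_{n-1}\in A$ expanded in the basis $B_1,\dots,B_n$, multilinearity and skew-symmetry reduce $[cB_1,Y_1,\dots,Y_{n-1}]$ to a scalar multiple of brackets of the form $[B_{j_1},\dots,B_{j_n}]$ with all indices distinct; the only such bracket that is nonzero is (a sign times) $[B_1,\dots,B_n]=B_1$, so $[FB_1,A,\dots,A]\subseteq FB_1$. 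Hence $FB_1$ is a nonzero proper ideal, and since $[B_1,\dots,B_n]=B_1\neq 0$ the bracket is nontrivial, so (\ref{B Fi algebra}) is not simple.

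Finally I would transport this conclusion across the isomorphism: if $\varphi\colon A\to (\text{algebra }(\ref{B Fi algebra}))$ is the isomorphism from the theorem, then $\varphi^{-1}(FB_1)$ is a nonzero proper ideal of $A$ (ideals pull back under $n$-Lie algebra homomorphisms, and an isomorphism sends nonzero proper subspaces to nonzero proper subspaces), and the bracket on $A$ is nontrivial because it is on (\ref{B Fi algebra}). Therefore $A$ is not simple.

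There is essentially no obstacle here: the real work was done in the classification theorem, and what remains is the short, routine observation that the single model algebra $\left[B_1,\dots,B_n\right]=B_1$ fails to be simple because the line through $B_1$ is an ideal. The only point requiring a line of care is checking that no other basis bracket $[B_{j_1},\dots,B_{j_n}]$ contributes — this follows immediately from skew-symmetry forcing distinct indices together with $\dim A=n$.
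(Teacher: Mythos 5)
Your proposal is correct and follows exactly the paper's (very brief) argument: the paper likewise observes that the span of $B_1$ is an ideal of the model algebra $[B_1,\dots,B_n]=B_1$ and invokes the classification theorem to transport non-simplicity to $A$. You have merely filled in the routine verification that the paper leaves implicit.
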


\section{Acknowledgment}

The authors thank Morningside Center of Mathematics, Chinese Academy of
Sciences for providing excellent research environment. This work is
partially supported by National Natural Science Foundation of China (Grant
No.11547101,11575286) and the Program of Higher-level talents of Inner
Mongolia University (21100-5145101).

\end{document}